\def\hlinewd#1{%
  \noalign{\ifnum0=`}\fi\hrule \@height #1 \futurelet
   \reserved@a\@xhline}
\newcommand{\vect}[1]{\boldsymbol{\mathbf{#1}}}
\begin{document}

\makeatletter
\def\footnoterule{\kern-8\p@
  \hrule \@width 2in \kern 3\p@} 
\makeatother

\title{\Large Graph Model Selection via Random Walks\thanks{This work was sponsored
by the Defense Advanced Research Projects Agency under Air Force Contract FA8721-05-C-0002. Opinions, interpretations, conclusions, and recommendations are those of the authors and are not necessarily endorsed by the United States Government.}}
\date{}
%
\author{Lin Li, William M. Campbell, Rajmonda S. Caceres\\
MIT Lincoln Laboratory\\
\texttt{ \{lin.li, wcampbell, rajmonda.caceres\}@ll.mit.edu}}


%
%
%

%
\maketitle
%
%

%
\begin{abstract}
%
%
In this paper, we present a novel approach based on the random walk process for finding meaningful representations of a graph model. Our approach leverages the transient behavior of many short random walks with novel initialization mechanisms to generate model discriminative features. These features are able to capture a more comprehensive structural signature of the underlying graph model. The resulting representation is invariant to both node permutation and the size of the graph, allowing direct comparison between large classes of graphs. We test our approach on two challenging model selection problems: the discrimination in the sparse regime of an Erd\"{o}s-Renyi model from a stochastic block model and the planted clique problem. Our representation approach achieves performance that closely matches known theoretical limits in addition to being computationally simple and scalable to large graphs.

\end{abstract}
%


%

\section{Introduction}\label{sec:intro}

%
%
Graphs are important data abstractions that allow us to analyze complex relational patterns in many application domains, such as social networks, information networks and protein networks. An active area of research focuses on theoretical models that define the generative mechanism of a graph. The mapping of an observed graph instance to a
model allows us to apply the theoretical knowledge we have about the model and to make precise
claims about the underlying structure of the data. Yet given the complexity and inherent noise in real
datasets, it is still very challenging to identify the best model for a given observed graph. We
discuss an approach for graph model selection that leverages embeddings of graphs in high dimensional feature space. In addition to gaining insight about graph structure, feature representations of graphs allow for the application of traditional tools in machine learning, many of which require continuous vector representations as input. 

In this paper, we introduce {\it Walk2Vec}, which uses random walks to characterize and compare structural properties of the underlying graphs. Our approach is based on the following intuition: structurally different graphs are likely to exhibit different random walk characteristics. For example, 
some graphs diffuse faster across nodes than others. 
Inspired by the intuition that diffusion properties reveal discriminating features for comparing graphs, we propose a flexible framework for mapping graphs or any substructures into an Euclidean feature space. Our approach has several important characteristics. It is invariant to both node labeling and the size of the graph, making it more appropriate for realistic applications. It leverages different diffusion instantiations on the same graph to capture a richer profile of graph structure. Finally, our approach maintains its performance robustness while being computationally efficient.

Various efforts in the literature ~\cite{Perozzi2014,Yanardag2015,Narayanan2016,Grover2016}  have focused on learning vector representations of nodes or subgraphs in the graph. 
Our approach differs from the above approaches by offering a unifying and flexible framework that can learn a feature representation of a single node in the graph, a subgraph or the whole graph. At the same time, we learn representations that are both node label and graph size invariant. This is an important characteristic considering that in many practical settings, we need to compare graphs of different sizes or graphs with different node labelings. However, in many such settings differences in graph size and node labelings do not necessarily imply different graph structural properties. Finally, our approach follows a different and novel way of capturing the rich and diverse structural patterns in a graph. By using and combining different initialization mechanisms for random walks on the same graph, we generate a more comprehensive structural signature of the graph.

A much closely related line of work focuses on mapping the graph as a whole into a topological or spectral space, often in the context of the model selection or graph classification task. Many features are considered to represent the graph, from various density and path features~\cite{caceres2016,Airoldi2011}, to distributions of frequent subgraphs or graphlets~\cite{Middendorf2004,Janssen2012}, to spectral features of graph matrices~\cite{Zhu2005,Fay2011,Takahashi2012}. 
Our approach leverages random walk based features to capture the essential structure of the graph. As we demonstrate in Sec.~\ref{sec:class}, for two extensively-studied model selection cases, our approach achieves classification performance very close to known theoretical results. Furthermore, when compared to topological embeddings of graphs~\cite{caceres2016,Airoldi2011}, we show that our random walk features lead to much more tightly clustered embeddings of similar graph instances.

Finally, in the graph kernel literature~\cite{Vishwanathan2010,Bach2008,Borgwardt2005,Gartner2003,Kashima2003}, similarity between two graphs is defined as a function of the number of matching random walks. A major focus in this literature is the reduction of computational complexity with various results giving polynomial time algorithms~\cite{Vishwanathan2010}.  
An important feature of our algorithm is its computational efficiency, making it scalable to big data settings.

\section{Problem and General Approach}\label{sec:prob}

%
%
\newcommand{\cc}{{\cal C}}
\newcommand{\cp}{{\cal P}}

Our motivating problem is graph classification. Let $G=(V,E)$ denote a connected graph with node set $V$ and edge set $E$. Let $\cc_1$ and
$\cc_2$ be two distinct random graph models. The goal is to classify a graph $G$ as being drawn from one
of these distributions $G\sim \cc_1$ or $G\sim\cc_2$. Although the problem can be more general, in this paper, we focus on the problem of selecting the best model given a set of generative graph models. 

Given an input graph $G$, our general
approach is to use random walks with different initial distributions to map the graph $G$ into a
Euclidean space $\phi(G)$, and then use standard machine learning
methods to perform supervised learning and assign the graph point $\phi(G)$ to its closest generative model.

\begin{figure}[h]
\vspace{-0.05in}
\centering
\includegraphics[clip,width=0.8\linewidth]{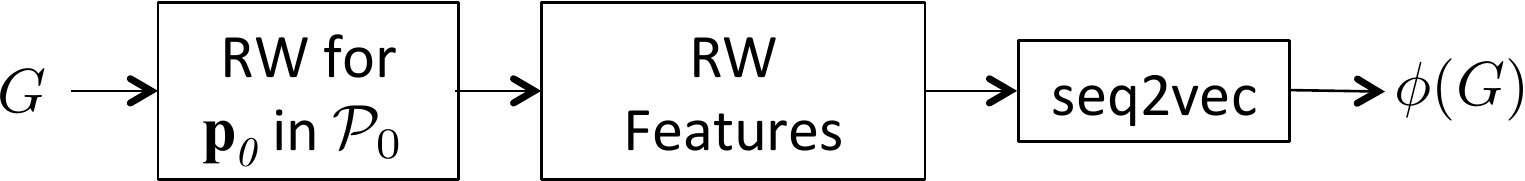}
\caption{System for {\it Walk2Vec}}\label{fig:w2v}
\vspace{-0.1in}
\end{figure}

The proposed approach, {\it Walk2Vec}, for random walk features and
mapping is shown in Figure~\ref{fig:w2v}. The first stage is to generate random walks on the graph $G$. Let $\mathcal{P}_0$ be a set of initial distributions over the nodes $\{1, \cdots, n\}$ where $n = |V|$. For each distribution $\vect{p}_0 \in \mathcal{P}_0$, perform $\tau$ random walk
steps on $G$ and correlate the $\tau$ steps to produce a random walk feature
vector $r(\vect{p}_0)$ (see Section ~\ref{sec:rw}). Then the resulting sequence of random walk features $\{r(\vect{p}_0)\}$ for all $\vect{p}_0 \in \mathcal{P}_0$ is then mapped into a single vector $\phi(G)$ in Euclidean space using the {\rm seq2vec} operation as shown in Figure~\ref{fig:w2v}.
%
%

A desirable property for $\phi(G)$ is that it is
invariant under graph isomorphism (permutation of the node
labels). In this paper, we achieve this property with two different
strategies. The first one is via careful selection of the initial distributions in
$\cp_0$ to produce invariant random walk features $r(\vect{p}_0)$; each feature vector $r(\vect{p}_0)$ is independent of the node labels. In this case, the {\rm seq2vec} can simply be a stacking operator and the resulting vector representation $\phi(G)$ of the graph $G$ is also independent of the node labels and thus permutation invariant (see Section~\ref{sec:rw_em}). 

The second strategy is to generate random walk features that are localized to each of the nodes in the graph $G$. For example, the random walks are initialized from a single node. In this case, each feature vector $r(\vect{p}^i_0)$ is associated with a node label $i$ where $i = 1, \cdots, n$. To aggregate the sequence of vectors $\{r(\vect{p}_0^i)\}_{i=1}^n$, the {\rm seq2vec} operation performs two steps: sparse coding and pooling. Sparse coding extracts high-level features from each $r(\vect{p}_0^i)$. The pooling step then combines these high-level features together and outputs a vector representation $\phi(G)$ of the graph $G$. The pooling operation has been used extensively in machine learning to achieve a better representation for classification. Additionally, in our case, pooling also provides a form of invariance under permutation of node labels (see Section~\ref{sec:sc_em}).

Several generalizations of our problem and approach should be
mentioned.  First, although we consider graph model selection, the
methods can be easily extended to graph classification or subgraph classification\footnote{Subgraph classification can be achieved by using a
fixed local sampling method, e.g., the ego-net of a node or the subgraph induced by a breadth-first-search.}. Second, the mapping $\phi(\cdot)$ is robust
and can be used for clustering, regression, etc.

\section{Random walk features and mapping}\label{sec:g2v}

%
%
%
Random walks have led to many important graph properties~\cite{lovasz1993random}, i.e., hitting time, mixing time, commute time, etc., and can also provide us with considerable insight into the structure of the graph.  

%
\subsection{Random-Walk Graph Features}\label{sec:rw}

The random walk process on a graph $G=(V,E)$ can be represented as follows. Let $\vect{A}$ be the adjacency matrix associated with $G$: $A_{ij} = 1$ if nodes $(i,j) \in E$ and $A_{ij} = 0$ otherwise. The probability of moving from the current node to the neighbor is given by the transition matrix 
$
{\vect W} = {\vect D}^{-1}{\vect A} 
$,
where ${\vect D}$ is the diagonal matrix of the degrees, i.e., $D_{ii} = d_i = \sum_{j}A_{ij}$. 
Suppose that the initial node is drawn from some initial probability distribution ${\vect p}_0$. The probability distribution after $t$ steps of random walk is given by
\begin{align}\label{walk_eqn}
{\vect p}_t = {\vect W}^T{\vect p}_{t-1} = ({\vect W}^T)^t{\vect p}_0\enspace.
\end{align}
For simplicity, we only consider the unweighted graphs. However, our approach can be easily extended to weighted graphs, where ${A}_{ij}~\in~\mathbb{R}_+$.

A fundamental property of a random walk on a connected, undirected and non-bipartite graph is that asymptotically, the probability distribution converges to a unique stationary distribution $\vect{\omega}$ where
$
\omega_i = {d_i}/{\sum_{k}d_k}
$.
That is, asymptotically the probability of being at node $i$ only depends on the degree of node $i$ and not on the initial node.

Let us consider random walks on $G$ with length~$\tau$. Note that: 1) the number of random walk steps $\tau$ must be sufficiently long to capture enough information about the graph; and 2) the probability distribution ${\vect p}_t$ is biased towards the high-degree nodes as the number of random walk steps $\tau$ increases. 

To generate the random walk feature on a graph, we now introduce a pair-wise distance matrix ${\vect M}$ between probability distributions $\{{\vect p}_0, {\vect p}_1, \cdots, {\vect p}_\tau\}$ up to step $\tau$: 
\begin{align}\label{distance}
M_{s t} = \|{\vect D}^{-\frac{1}{2}}{\vect p}_s - {\vect D}^{-\frac{1}{2}}{\vect p}_{t}\|_2 \enspace.
\end{align}
where $0\le s,t\le \tau$.
Elements in $\vect{M}$ capture the temporal changes between various steps of the random walk. This distance can also be seen as the $L^2$ distance between the two probabilities ${\vect p}_s$ and ${\vect p}_t$ with respect to the stationary distribution ${\vect \omega}$~\cite{aldous2002reversible}.
This measure has also been used in~\cite{pons2005computing} to compute distance between nodes using random walks on a graph for community detection and has shown good performance.

Given an initial distribution ${\vect p}_0$, the random walk feature on graph $G$ is defined by the function $r: \mathcal{P}_0 \rightarrow \mathbb{R}_+^d$ with $d = \frac{\tau^2+\tau}{2}$ and
\begin{align}\label{rwfeat}
r({\vect p}_0) = \operatorname{triu}({\vect M})
\end{align}
where $\vect{M}$ is defined in Eqn. \eqref{distance} and $\operatorname{triu}(\cdot)$ returns the upper triangular elements of the matrix. Note that $r({\vect p}_0)$ is independent of the graph size.

In this paper, we use the $L^2$ distance for computing the random walk feature. However, it is trivial to fit our framework to other distance metrics between two probabilities, such as total variation distance~\cite{clarkson1933definitions} and symmetric Kullback-Leibler (KL) divergence~\cite{kullback1951information, johnson2001symmetrizing}.  One can also use the similarity matrix to generate the random walk feature. An example is to compute the following similarity matrix ${\vect S}$:
\begin{align}\label{similarity}
S_{s t} = \frac{{\vect p}_s^T{\vect D}^{-1}{\vect p}_{t}}{\|{\vect D}^{-\frac{1}{2}}{\vect p}_s\|_2\|{\vect D}^{-\frac{1}{2}}{\vect p}_{t}\|_2} \enspace.
\end{align}
where $0\le s,t\le \tau$.
and replace ${\vect M}$ with ${\vect S}$ in Eqn. \eqref{rwfeat}.


%
%
%
\subsection{Walk2Vec}\label{sec:rw_em}

To map the input graph into a Euclidean space, we now restrict the random walk features $\{r(\vect{p}_0)\}$ to those that are invariant to node label. This property is addressed by choosing an appropriate initial probability distribution $\vect{p}_0$ for the random walks.

\begin{lemma}\label{lem:condition}
Let ${\vect A}$ be the adjacency matrix of the graph $G=(V,E)$ and let ${\vect p}_0 = \frac{\vect{g}(A)}{\|\vect{g}(A)\|_1}$ be an initial distribution on $G$, where $\vect{g}$ is a nonnegative function of the adjacency matrix. If for any permuation matrix $\vect{\Pi}$ of compatible dimension,
\begin{align}\label{condition}
\vect{g}(\vect{\Pi}A\vect{\Pi}^T)=\vect{\Pi}\vect{g}(A)\enspace,
\end{align}
then the distance matrix ${\vect M}$ defined in Eqn.~(\ref{rwfeat}) is invariant under node permutation.  
\end{lemma}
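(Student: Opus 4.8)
The plan is to show that applying a node permutation $\vect{\Pi}$ to the graph leaves every entry of the distance matrix $\vect{M}$ unchanged. Writing $\tilde{\vect{A}} = \vect{\Pi}\vect{A}\vect{\Pi}^T$ for the relabeled adjacency matrix and decorating every derived quantity with a tilde, I would track how each object in the random-walk pipeline transforms and show that the transformations are exactly orthogonal conjugations that cancel in the final $L^2$ distance. The observation driving the whole argument is that every intermediate quantity turns out to be \emph{equivariant} (it picks up a factor of $\vect{\Pi}$), whereas the distance $M_{st}$ is built from an $L^2$ norm that is \emph{invariant} under the orthogonal map $\vect{\Pi}$.

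First I would establish the chain of equivariances. Since $\vect{\Pi}^T\vect{1} = \vect{1}$, the degree vector satisfies $\tilde{\vect{d}} = \tilde{\vect{A}}\vect{1} = \vect{\Pi}\vect{d}$, so $\tilde{\vect{D}} = \vect{\Pi}\vect{D}\vect{\Pi}^T$; using $\vect{\Pi}^T\vect{\Pi} = \vect{I}$ this gives $\tilde{\vect{W}} = \tilde{\vect{D}}^{-1}\tilde{\vect{A}} = \vect{\Pi}\vect{W}\vect{\Pi}^T$. For the initial distribution, the hypothesis (\ref{condition}) yields $\vect{g}(\tilde{\vect{A}}) = \vect{\Pi}\vect{g}(\vect{A})$, and because $\vect{\Pi}$ merely permutes the nonnegative entries without changing their sum, $\|\vect{g}(\tilde{\vect{A}})\|_1 = \|\vect{g}(\vect{A})\|_1$; hence $\tilde{\vect{p}}_0 = \vect{\Pi}\vect{p}_0$. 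A short induction on $t$ using the recursion $\vect{p}_t = \vect{W}^T\vect{p}_{t-1}$ from Eqn. (\ref{walk_eqn}) then propagates this: since $\tilde{\vect{W}}^T = \vect{\Pi}\vect{W}^T\vect{\Pi}^T$, the base case $\tilde{\vect{p}}_0 = \vect{\Pi}\vect{p}_0$ gives $\tilde{\vect{p}}_t = \vect{\Pi}\vect{W}^T\vect{\Pi}^T\vect{\Pi}\vect{p}_{t-1} = \vect{\Pi}\vect{p}_t$ for every $0 \le t \le \tau$.

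Finally I would cancel the permutation in the distance entries. Because $\vect{D}$ is diagonal, $\tilde{\vect{D}}^{-1/2} = \vect{\Pi}\vect{D}^{-1/2}\vect{\Pi}^T$, so that $\tilde{\vect{D}}^{-1/2}\tilde{\vect{p}}_t = \vect{\Pi}\vect{D}^{-1/2}\vect{\Pi}^T\vect{\Pi}\vect{p}_t = \vect{\Pi}\vect{D}^{-1/2}\vect{p}_t$. Substituting into the definition of $M_{st}$ and using that $\vect{\Pi}$ is orthogonal, and hence norm-preserving, gives $\tilde{M}_{st} = \|\vect{\Pi}(\vect{D}^{-1/2}\vect{p}_s - \vect{D}^{-1/2}\vect{p}_t)\|_2 = M_{st}$, i.e., $\tilde{\vect{M}} = \vect{M}$. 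I expect the only delicate points to be (i) checking that the hypothesis on $\vect{g}$ is precisely what makes the base case $\tilde{\vect{p}}_0 = \vect{\Pi}\vect{p}_0$ hold, since without equivariance of $\vect{g}$ the induction never starts, and (ii) justifying that $\vect{D}^{-1/2}$ conjugates cleanly through $\vect{\Pi}$; everything past these reduces to the orthogonality identities $\vect{\Pi}^T\vect{\Pi} = \vect{I}$ and $\|\vect{\Pi}\vect{x}\|_2 = \|\vect{x}\|_2$.
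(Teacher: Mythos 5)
Your proposal is correct and follows essentially the same route as the paper: establish the equivariances $\tilde{\vect{D}}=\vect{\Pi}\vect{D}\vect{\Pi}^T$, $\tilde{\vect{W}}=\vect{\Pi}\vect{W}\vect{\Pi}^T$, use the hypothesis on $\vect{g}$ to get $\tilde{\vect{p}}_0=\vect{\Pi}\vect{p}_0$ and hence $\tilde{\vect{p}}_t=\vect{\Pi}\vect{p}_t$, then cancel $\vect{\Pi}$ in the norm. You are in fact somewhat more thorough than the paper, which leaves the final cancellation in $M_{st}$ as ``easy to check'' while you carry out the $\vect{D}^{-1/2}$ conjugation and the $\ell_1$-norm invariance explicitly.
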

\begin{proof}
Suppose ${\vect A}$ and $\tilde{\vect A}$ are two adjacency matrices, where there exists a permutation ${\vect \Pi}$ such that $\tilde{\vect A} = {\vect \Pi}{\vect A}{\vect \Pi}^T$, i.e., ${\vect A}$ and $\tilde{\vect A}$ represent the same graph with a node permutation. Then their associated degree matrices and transition matrices follow the relation $\tilde{\vect D} = {\vect \Pi}{\vect D}{\vect \Pi}^T$ and $\tilde{\vect W} = {\vect \Pi}{\vect W}_1{\vect \Pi}^T$, respectively. And the associated random walk distributions ${\vect p}_{t|{\vect A}}$ and ${\vect p}_{t|\tilde{\vect A}}$ at step $t$ are 
\begin{align}\label{perm}
 {\vect p}_{t|\tilde{\vect A}} &= (\tilde{\vect W}^T)^t{\vect p}_{0|\tilde{\vect A}} = {\vect \Pi}({\vect W}^T)^t{\vect \Pi}^T{\vect p}_{0|\tilde{\vect A}}\enspace,
\end{align}
where ${\vect p}_{0|{\vect A}}:=\frac{{\vect g}({\vect A})}{\|{\vect g}({\vect A})\|_1}$ and ${\vect p}_{0|\tilde{\vect A}}:=\frac{{\vect g}(\tilde{\vect A})}{\|{\vect g}(\tilde{\vect A})\|_1}$. Then it follows from Eqn. \eqref{condition} that ${\vect p}_{0|\tilde{A}} = {\vect \Pi}{\vect p}_{0|{\vect A}}$. Subsequently, Eqn. \eqref{perm} becomes
\begin{align}
{\vect p}_{t|\tilde{\vect A}} = {\vect \Pi}({\vect W}^T)^t{\vect p}_{0|{\vect A}} = {\vect \Pi}{\vect p}_{t|{\vect A}}\enspace.
\end{align}
Then it is easy to check that the distance matrix ${\vect M}$ is invariant under the permutation ${\vect \Pi}$.
\end{proof}  

Examples of $\vect{p}_0$ that satisfy the condition in Lemma~\ref{lem:condition} include the uniform distribution, normalized centrality vector, normalized local clustering coefficient, etc.  
Additionally, $g(i)$ can also be a function that selects one (or a subset) of the nodes, i.e., the node with the highest centrality value or the highest clustering coefficient.
Although there are many ways of selecting the initial distribution $\vect{p}_0$ that is permutation invariant. In Section~\ref{sec:expt}, we will show one way of selecting the set of initial distributions that gives good results. 

Suppose that $\mathcal{P}_0$ is the set of initial distributions over the graph; each ${\vect p}_0 \in \mathcal{P}_0$ leads to a permutation-invariant feature vector ${\vect x}=r({\vect p}_0)$. Then a common approach to construct the mapping $\phi(G)$ is by stacking the sequence of random walk features $\{r(\vect{p}_0)\}$ into a single vector; see Section~\ref{sec:expt} for an example.



%
%
\newcommand{\bD}{\vect{D}}
\newcommand{\bx}{\vect{x}}
\newcommand{\by}{\vect{y}}
\newcommand{\argmin}{\operatornamewithlimits{argmin}}
\newcommand{\argmax}{\operatornamewithlimits{argmax}}

\subsection{Walk2Vec-SC}\label{sec:sc_em}
As discussed previously, one way to generate a mapping $\phi(G)$ of the graph $G$ is to restrict each initial distribution $\vect{p}_0 \in \mathcal{P}_0$ such that the random walk feature $r(\vect{p}_0)$ is invariant to permutation of node labels. Alternatively, one can choose a set of initial distributions that are localized to the nodes of $G$. The permutation invariant property of the mapping $\phi(G)$ can be achieved by sparse coding (SC) the localized random walk features, followed by a pooling operation. 

To extract localized random walk features, we use a set of initial probability distributions
${\cal P}_0 = \{\vect{p}_0^1, \cdots, \vect{p}_0^n\}$; each $\vect{p}_0^i$ is localized in the graph to node~$i$.  Several examples are: 1) a delta distribution which is one
only on the $i$th node, $\vect{p}_0^i=\vect{e}_i$ where $\vect{e}_i$
is the $i$th column of the identity matrix, or 2) a uniform
distribution on the ego-net of the $i$th node.  We then find random
walk features as in~\eqref{rwfeat} for each $\vect{p}_0^i \in \mathcal{P}_0$ to
obtain a sequence of feature vectors $\vect{x}_i=r(\vect{p}_0^i)$.

If we combined the vectors $\vect{x}_i$ using a simple function such
as averaging $\bar{\vect x} = \frac{1}{n}\sum_{i=1}^n\vect{x}_i$, then the output would be a single vector that is also permutation invariant. However,
significant information would be lost in the averaged vector $\bar{\vect x}$.  A better way to approach this problem is to find a compact high-level representation for each of the feature vectors ${\vect x}_i$ using sparse coding \cite{yang2009linear, Gwon16} and then aggregate these high-level representations together using pooling. 

We train a matrix ${\bf D}$, the dictionary, that is used to represent
an input $\bx_i$ as ${\bf D}\by_i\approx\bx_i$ where $\by_i$ is a
sparse vector. The dictionary is overcomplete; that is, redundant
information is included to allow a sparse solution for $\by_i$.  For a
known ${\bf D}$, we use the LASSO criterion~\cite{tibshirani1996regression} for sparse
coding
\begin{equation}\label{eqn:lasso}
\by_i = \argmin_{\hat{\by}_i} \frac{1}{2}\|\bD{\hat{\by}_i}-\bx_i\|_2^2 + \lambda_1\|{\hat{\by}_i}\|_1.
\end{equation}
The $\ell_1$ penalty in~\eqref{eqn:lasso} encourages sparsity in $\by_i$.  Dictionary learning is via the methods
in~\cite{mairal2010online}, and LASSO is solved with the LARS
algorithm~\cite{efron2004least}, both in the SPAMS software
package. Intuitively, the dictionary atoms (columns of $\bD$)
represent different random walks, and the sparse coordinates $\by_i$ are
the atoms seen at node~$i$.

The mapping $\phi(G)$ is then computed by pooling the sparse coded
vectors $\by_i$ across all $i$.  Pooling is either average pooling
$\phi(G) = \frac{1}{n}\sum_{i=1}^{n}\by_i$
or max pooling
$ \phi(G)_j = \argmax_{i} y_{i,j}$,
where $y_{i,j}$ is the $j$th component of $\by_i$ and
$\phi(G)_j$ is the $j$th component of $\phi(G)$. Sparse coding and
pooling perform the {\it seq2vec} operation in Figure~\ref{fig:w2v}.
Pooling has the property that it creates a permutation invariant
mapping of the graph.  That is, reordering the node
labels will not change the mapping~$\phi(G)$.

\section{Experiment Setup}\label{sec:expt}
For Walk2Vec, we first need to select a set of initial distributions
that satisfy the condition in Lemma~\ref{lem:condition}. In the
experiment, we consider a set of four initial distributions
$\mathcal{P}_0 = \{{\vect p}^{\rm max}_0, {\vect p}^{\rm min}_0,
{\vect p}^{\rm median}_0, {\vect p}^{\rm mean}_0\}$. Specifically,
${\vect p}^{\rm max}_0$ (or ${\vect p}^{\rm min}_0$) corresponds to a
delta distribution ${\vect e}_i$, where $i$ is the index of the node
that has the maximum (or minimum) node degree in the graph. Similarly, ${\vect p}^{\rm median}_0$ (or ${\vect p}^{\rm mean}_0$) corresponds to a delta distribution ${\vect e}_i$, where $i$ is the index of the
node whose degree is the closest to the median (or mean) node degree
in the graph. In other words, the random walks on $G$ are initialized
from each of the four above-mentioned type of nodes. Note that to satisfy the
condition in Eqn.~\eqref{condition}, the selected nodes also need to
be unique. In the case that there exist more than one maximum (or
minimum) degree node, we pick the one that has the maximum (or
minimum) PageRank in the graph. A similar strategy is used in
selecting the median and mean degree node in the graph. The resulting
representation of the graph is given by stacking all the feature
vectors $\phi(G) = [r({\vect p}^{\rm max}_0)^T, r({\vect p}^{\rm
min}_0)^T ,r({\vect p}^{\rm median}_0)^T,r({\vect p}^{\rm
mean}_0)^T]^T$.

The setup for the Walk2Vec-SC system is as follows. For each graph, we
use the set $\mathcal{P}_0 = \{{\vect p}^{1}_0, \cdots, {\vect
p}^n_0\}$, where ${\vect p}_0^i = {\vect e}_i$, to initiate the random
walk process. This leads to a sequence of feature vectors ${\vect x}_i
= r({\vect p}^i_0)$. Given a training set, a dictionary of $100$ atoms
is trained using the SPAMS tool for Python.  The dictionary is used to
convert the random features into sparse vectors for each node
using~\eqref{eqn:lasso}, and we set $\lambda_1=0.15$ for all the
experiments. After computing sparse vectors, for each graph the
mapping $\phi(G)$ is found using pooling.


Furthermore, we train a random forest classifier~\cite{Breiman:2001:RF:570181.570182} and use the learned model to classify a collection of unlabeled graph instances.  
All random forest classifiers are trained with $100$ decision trees.


\section{Graph Model Selection: Case Studies}\label{sec:class}

%
%
To validate our new approach, we apply Walk2Vec and Walk2Vec-SC to two
graph model selection problems: 1)~Erd\"{o}s-Renyi vs. stochastic
block model, and 2)~planted clique problem.

\subsection{Erd\"{o}s-Renyi vs. Stochastic Block Model}\label{sec:sbm}
The problem of distinguishing between an
Erd\"{o}s-Renyi (ER) graph~\cite{Erdos60} and a stochastic
block model (SBM)~\cite{Snijder97} is as follows. Let $p$ be the connection probability for the ER graph. Consider the case of a SBM graph with two communities (blocks) of equal size $n/2$ where $n=|V|$. The cross-community probability is $p_{\rm out}$ and the within-community probability is $p_{\rm in}$; the density is given by $(p_{\rm in}+p_{\rm out})/2$. As the
difference $p_{\rm in} - p_{\rm out} >0$ becomes smaller, the SBM graphs become
harder to distinguish from ER graphs of the same density. Let $\delta = p_{\rm in} - p_{\rm out}$. For graphs of the same density, i.e., $p = (p_{\rm in}+p_{\rm out})/2$, the theoretical
limit~\cite{nadakuditi2012graph} for discriminating the two models is 
\begin{align}
\delta_{\rm crit.} = 2\sqrt{\frac{p}{n}}\enspace.
\end{align}
This limit offers a precise mechanism for quantifying the robustness of our two graph mapping algorithms. 

For each pair of parameters $(p, \delta)$, we generate $1000$ ER graphs with density $p$ and $1000$ SBM graphs with $p_{\rm in}$ and $p_{\rm out}$ such that $p_{\rm in} - p_{\rm out} = \delta$ and the density $(p_{\rm in}+p_{\rm out})/2 = p$. All graphs are generated with $n = 1000$ number of nodes. 

Figure~\ref{sbm:embed} shows the $2$-dimensional embeddings of vector representations $\phi(G)$ of ER graphs with $p = 0.05$ and SBM graphs with various $\delta$ values. The low-dimensional embedding is performed using the Principal Component Analysis (PCA). Specifically, Figure~\ref{sbm:embed}(a) shows the $2$-dimensional embeddings of the  Walk2Vec graph representations. Observe that for large $\delta$ value, the SBM graphs are far away from the ER graphs. As the $\delta$ value decreases, the SBM graphs get closer to the ER graphs. Figure~\ref{sbm:embed}(b) shows the $2$-dimensional embeddings of the Walk2Vec-SC graph representations. Observe that for large $\delta$ value, the SBM graphs form clusters. One can almost see a loop as the $\delta$ value decreases. This implies that the Walk2Vec-SC method could also be useful in estimating the SBM parameter~$\delta$. 

Finally, observe how, especially for the Walk2Vec-SC embeddings, stochastic block instances of the same $\delta$ parameter cluster together in space. This is a very desired property of graph embeddings since it demonstrates robustness and stability of such embeddings in the presence of model randomness or noise. Furthermore, in a similar problem setting, where the goal was to discriminate between graph instances generated by two different model parameters, we would prefer the separation of similar instances into tightly-knit, separable clusters.

\begin{figure}[h!]
\vspace{-0.2in}
\centering
\begin{subfigure}[b]{0.95\linewidth}
\centering
\includegraphics[width=1\linewidth]{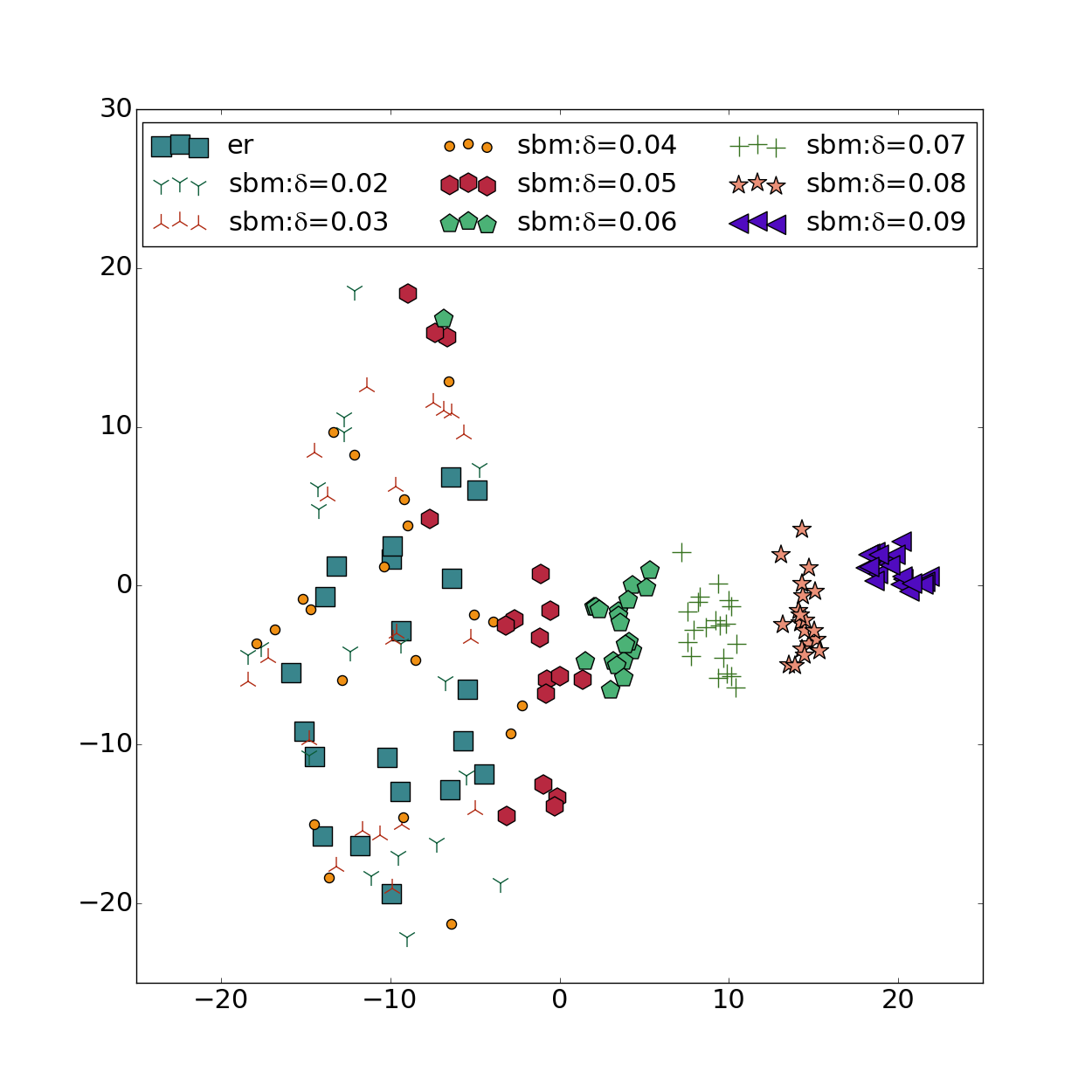}
\vspace{-0.3in}
\caption{Walk2Vec}
\end{subfigure}
\begin{subfigure}[b]{0.95\linewidth}
\centering
\includegraphics[width=1\linewidth]{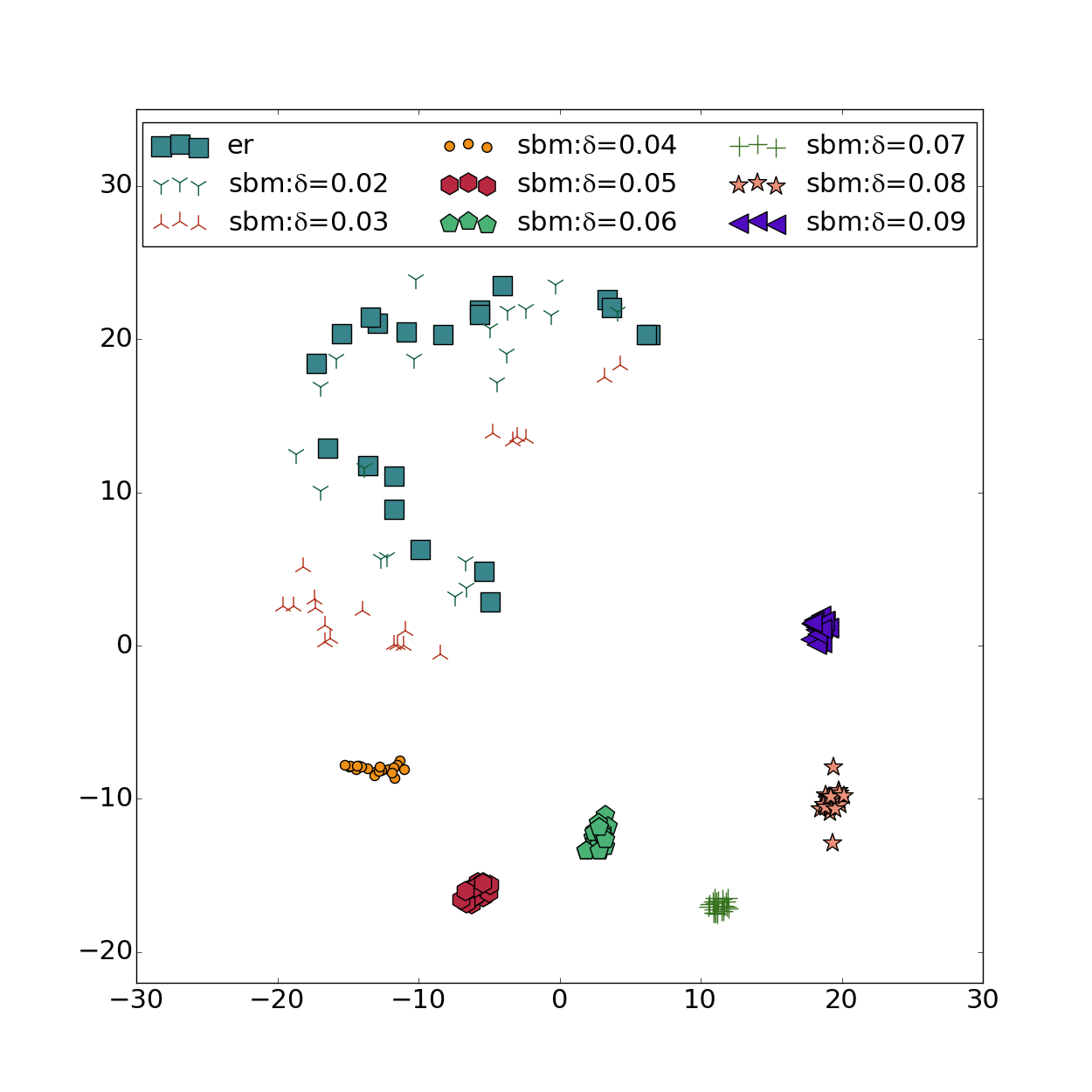}
\vspace{-0.3in}
\caption{Walk2Vec-SC}
\end{subfigure}
\vspace{-0.15in}
\caption{Two-dimensional embeddings of the vector graph representations of ER graphs and SBM graphs. All vector representations are computed using $\tau=15$ random walk steps. }
\label{sbm:embed}
\end{figure}

After mapping each graph into its vector representation $\phi(G)$ using the proposed methods, we then train a random forest classifier on the $500$ vector instances of each model and tested on the $500$ vector instances of each model. 
Fig. \ref{fig:pt} shows the AUC performance on the ER vs. SBM problem using Walk2Vec as $\delta$ increases and for various number of random walk steps. Here $p=0.05$ and $n=1000$. The dashed vertical line represent the theoretical threshold $\delta_{\rm crit.}$ for discriminating ER and SBM models. Observe that the phase transition curve gets sharper as the number of random walk steps $\tau$ increases. For $\tau>10$, increasing $\tau$ has little effect on the performance. For the rest of the experiment, we set $\tau=15$.

The two heat maps in Fig.~\ref{fig:sbm_heat} show the results of the graph classification using Walk2Vec and Walk2Vec-SC, respectively, for various densities $p$ and SBM $\delta$ values. The dark red corresponds to $AUC \approx 1$ and dark blue represent $AUC \approx 0.5$ (i.e., random detection). The dashed line represents the theoretical limit $\delta_{\rm crit.}$. Observe that simulations agree well with the analytical prediction. In particular, the Walk2Vec-SC system exhibits a very sharp phase transition that almost overlaps with the analytical prediction at all densities~$p$.

\begin{figure}[h!]
\centering
\includegraphics[width=0.95\linewidth]{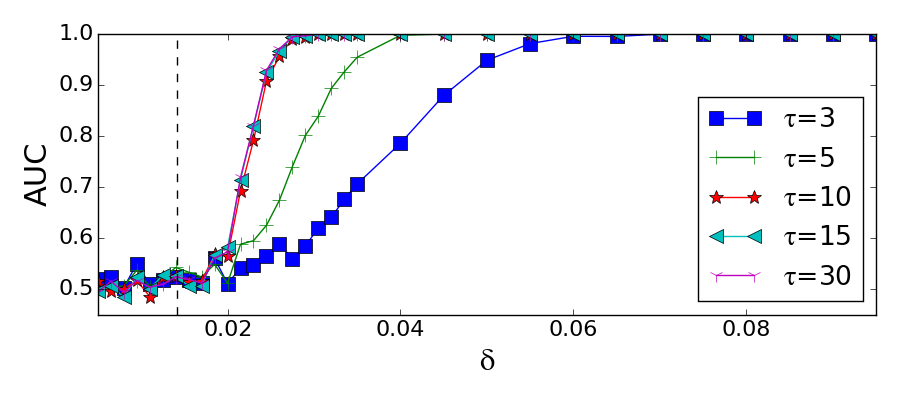}
\smallskip
\vspace{-0.3in}
\caption{AUC performance on the ER vs. SBM problem using Walk2Vec and with varying random walk steps. The dashed line corresponds to the analytical prediction of the phase transition $\delta_{\rm crit.}$.}\label{fig:pt}
\end{figure}

\begin{figure}[h!]
\centering
\begin{subfigure}[b]{1\linewidth}
\vspace{-0.3in}
\centering
\includegraphics[width=1\linewidth]{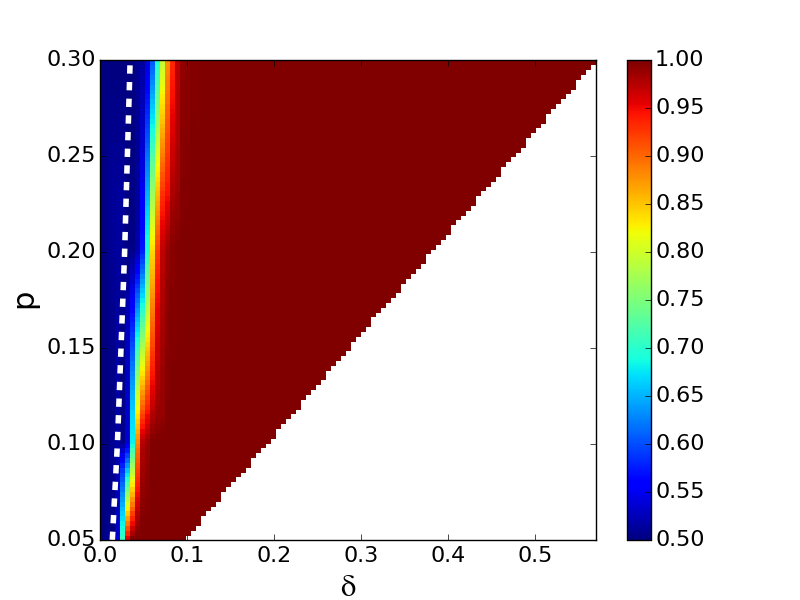}
\caption{Walk2Vec}
\end{subfigure}
\begin{subfigure}[b]{1\linewidth}
\vspace{-0.25in}
\centering
\includegraphics[width=1\linewidth]{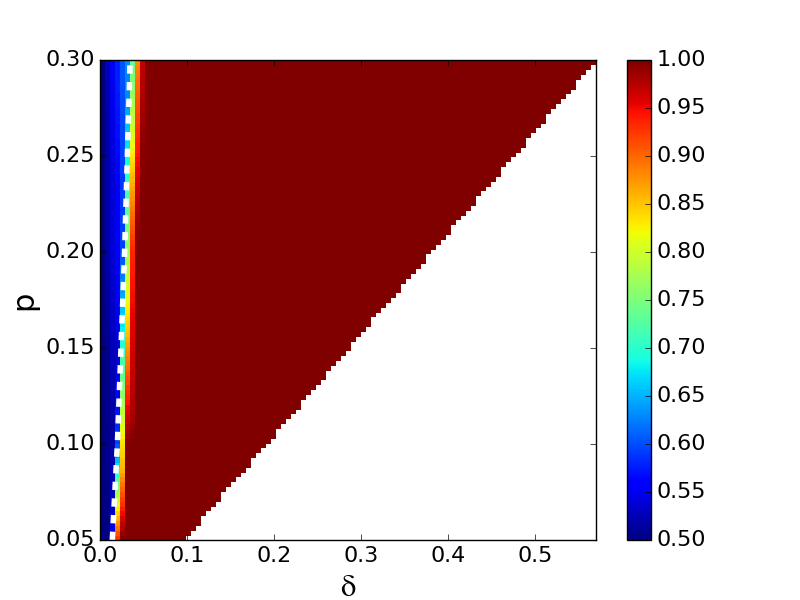}
\vspace{-0.25in}
\caption{Walk2Vec-SC}
\end{subfigure}
\vspace{-0.25in}
\caption{AUC performance on the ER vs. SBM problem. The dashed white line represents the analytical prediction of the phase transition $\delta_{\rm crit.}$ for various densities $p$.}\label{fig:sbm_heat}
\end{figure}
\vspace{-0.1in}

\label{sec:er_sbm} 

\vspace{-0.1in}
%
%
\subsection{Planted Clique Problem}\label{sec:planted}
In this section, we consider the related problem of distinguishing between an ER graph and an ER
graph with a planted clique of size $k$. Let $\beta = k/\sqrt{n}$ where $n=|V|$. The classification problem gets harder as the size of the clique $k$ becomes smaller. As shown in~\cite{nadakuditi2012hard}, the limit of detecting a planted clique graph from a ER graph is 
\begin{align}
\beta_{\rm crit.} = \sqrt{\frac{p}{1-p}}\enspace,
\end{align}
where $p$ is the connection probability of ER graphs.

To generate a graph with a planted clique, we first generate an ER graph with density $p$. Then we randomly select $k$ nodes from the graph and connect all pairs of distinct nodes in the selected node set. For each pair of $(p,\beta)$, we generate $1000$ ER graphs with density $p$ and $1000$ ER graphs of the same density and with a planted clique of size $k$ such that $k/\sqrt{n} = \beta$. All graphs are generated with $n=1000$ number of nodes.

Figure~\ref{planted:embed} shows the $2$-dimensional embeddings of the vector representations $\phi(G)$ of ER graphs with $p = 0.5$ and ER graphs with planted cliques. All graphs are of the size $n=1000$. The embedding is performed using PCA. Observe from both Figure~\ref{planted:embed}(a) and \ref{planted:embed}(a) that graphs with large planted clique (i.e., large $\beta$ value) are further away from the ER graphs. As $\beta$ decreases, these graphs move closer to the ER graphs. 

\begin{figure}[h!]
\centering
\begin{subfigure}[b]{0.95\linewidth}
\centering
\includegraphics[width=1\linewidth]{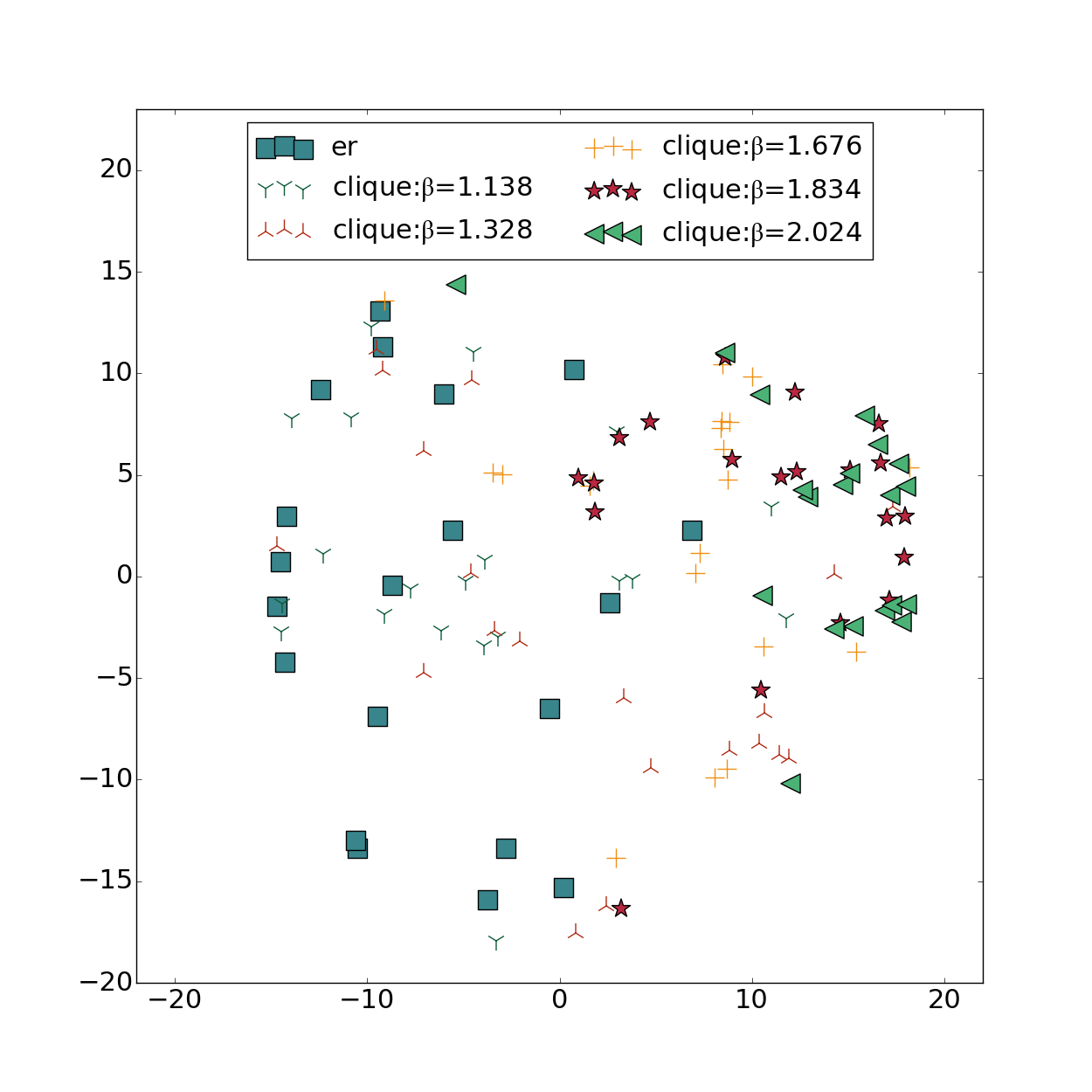}
\vspace{-0.3in}
\caption{Walk2Vec}
\end{subfigure}
\begin{subfigure}[b]{0.95\linewidth}
\centering
\vspace{-0.03in}
\includegraphics[width=1\linewidth]{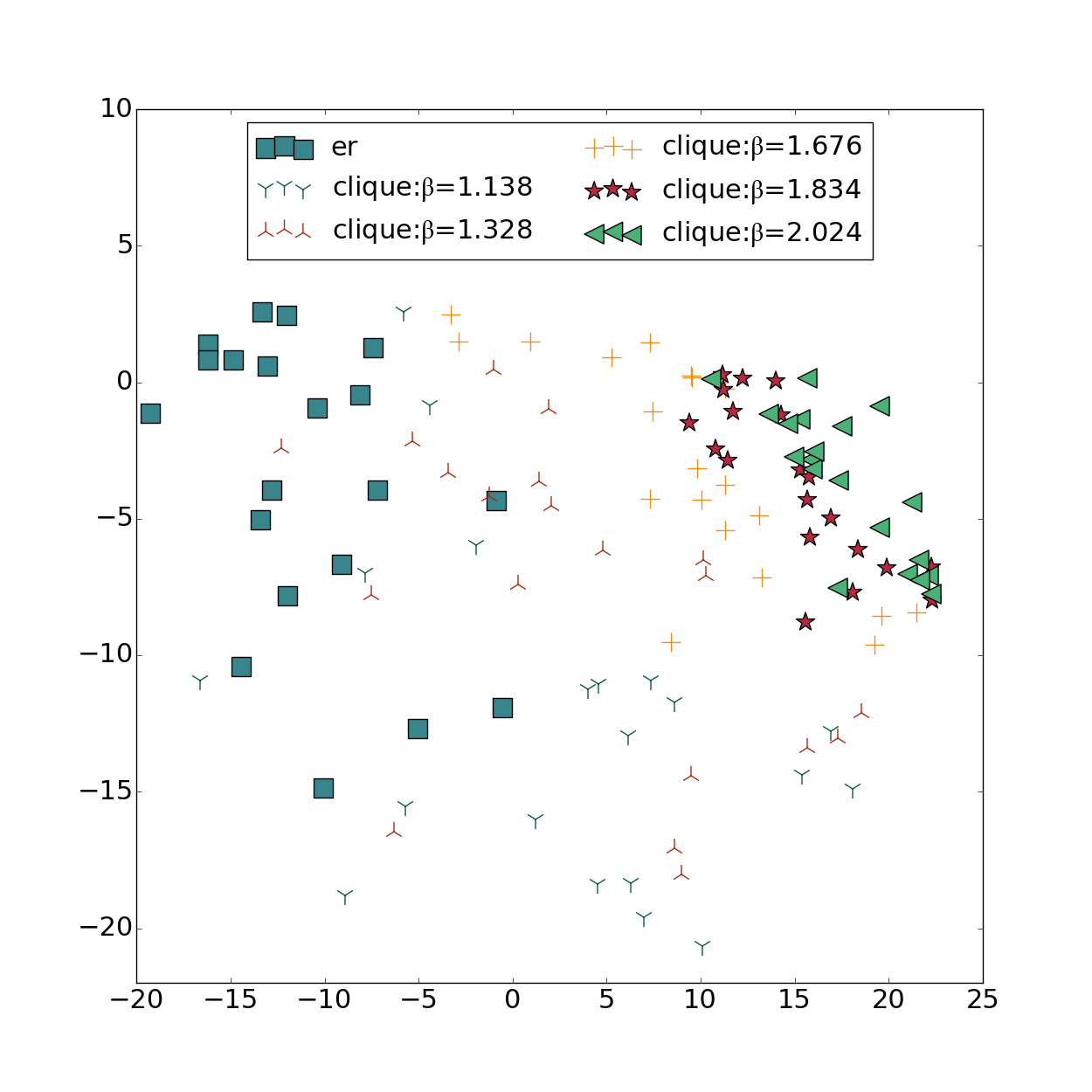}
\vspace{-0.3in}
\caption{Walk2Vec-SC}
\end{subfigure}
\caption{Two-dimensional embeddings of the vector graph representations of ER graphs and ER graphs with planted cliques.}
\label{planted:embed}
\end{figure}

After mapping each graph into its vector representation $\phi(G)$ using the proposed methods, we then train a random forest classifier on the 500 vector instances of each model and tested on the 500 vector instances of each model.
Fig.~\ref{fig:clique_heat} shows AUC performance of the planted clique problem for various graph densities $p$ and clique parameter values $\beta$. The dark red correspond to $AUC \approx 1$ and the dark blue corresponds to $AUC \approx 0.5$. The dashed line represent the theoretical limit for clique detection. Observe from Fig.~\ref{fig:clique_heat} that both Walk2Vec and Walk2Vec-SC systems perform well. The agreement between the theoretical limit and the simulations is excellent for $n=1000$. As $n$ increases, the transition is expected to get sharper.

\begin{figure}[h!]
\centering
\begin{subfigure}[b]{1\linewidth}
\centering
\includegraphics[width=1\linewidth]{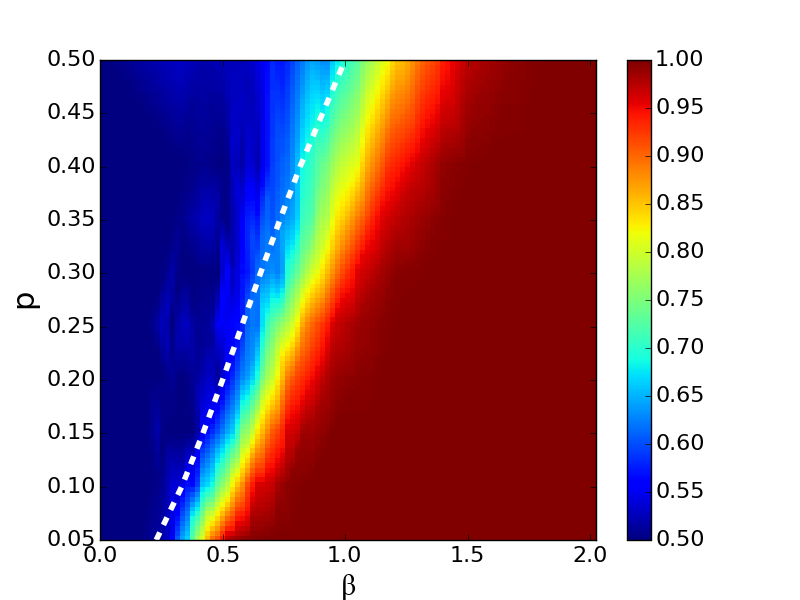}
\caption{Walk2Vec}
\end{subfigure}
\begin{subfigure}[b]{1\linewidth}
\centering
\includegraphics[width=1\linewidth]{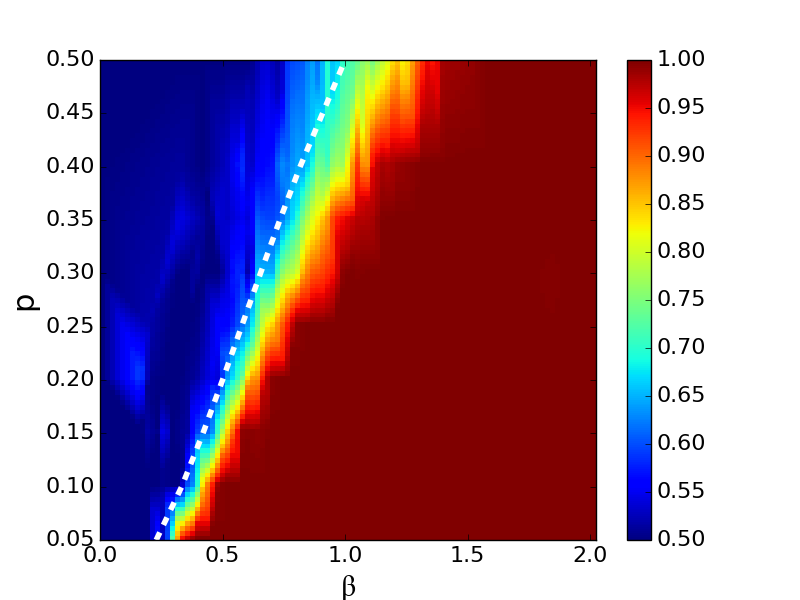}
\caption{Walk2Vec-SC}
\end{subfigure}
\caption{AUC performance of the planted clique problem. The dashed line represents the analytical phase transition prediction $\beta_{\rm crit.}$.}\label{fig:clique_heat}
\end{figure}

\label{sec:pc}

\subsection{Performance Comparison}
We compare the performance of the Walk2Vec and Walk2Vec-SC embeddings with that of the graph topological feature embeddings discussed in~\cite{caceres2016}. We consider the following $26$ topological features: degree centrality (1-4), betweenness centrality (5-8), closeness centrality (9-12), clustering coefficient (13-16), diameter (17), radius (18), triad count (19-22), average shortest path length (23-26). Similarly to the experimental setup in~\cite{caceres2016}, if one feature is assigned four numbers, this means we considered the maximum, the minimum, the average and the standard deviation over each node in the graph. We train a random forest classifier and use the learned model to classify a collection of unlabeled graph instances.

\begin{figure}[h!]
\centering
\includegraphics[width=0.95\linewidth]{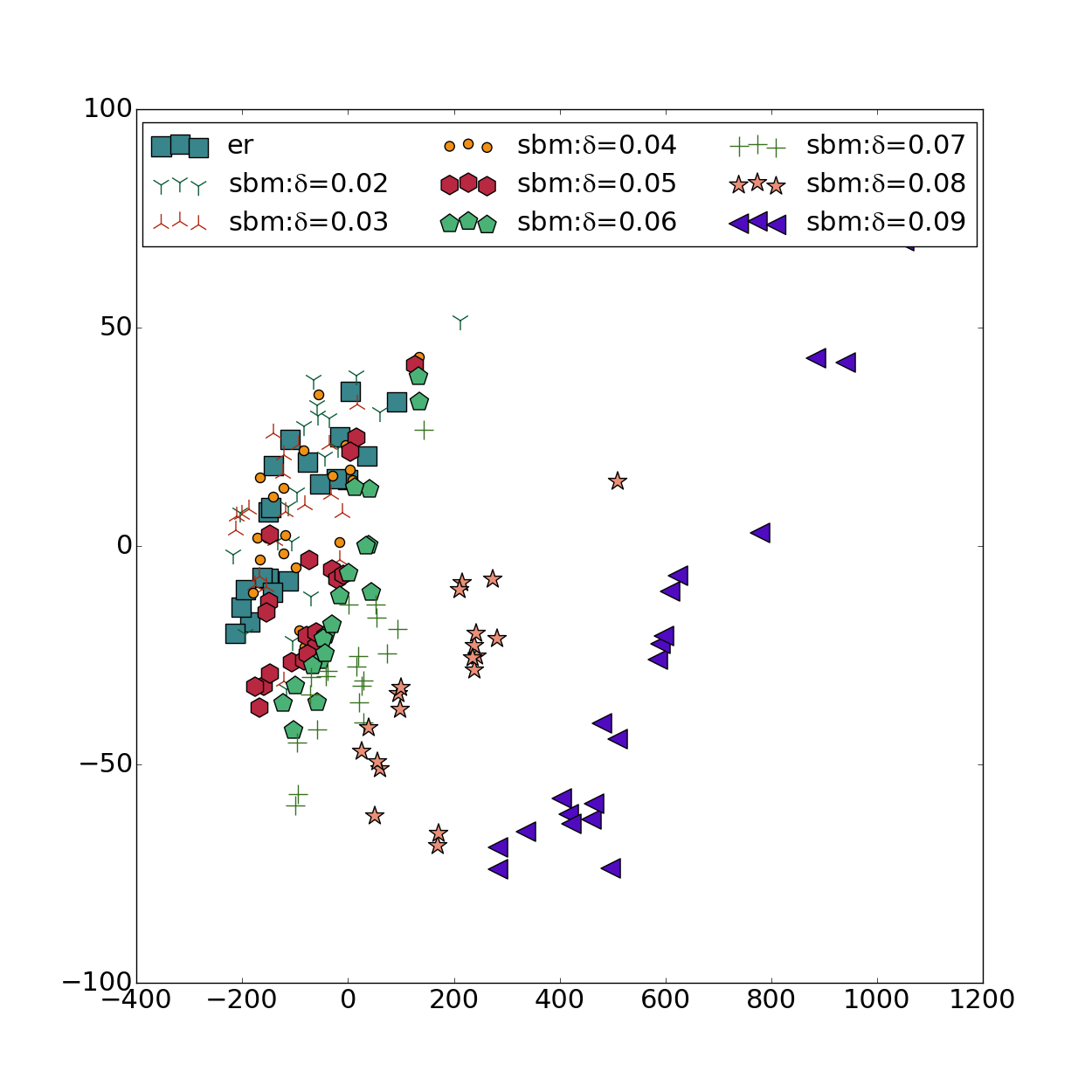}
\smallskip
\vspace{-0.2in}
\caption{Two-dimensional embeddings of the topological features of ER graphs and SBM graphs}\label{fig:top_sbm}
\end{figure}

\begin{table}[h!]
\centering
\caption{Performance comparison on the ER vs. SBM problem. Here $n=1000$ and $p = 0.05$.}
\small
\begin{tabular}{ | l | c  |c | c|}
  \hline
  $\delta$ & Topological Feats. & Walk2Vec & Walk2Vec-SC\\ 
\hlinewd{1pt}
  0.005 & 0.51& 0.48& 0.52\\
  0.008 & 0.49& 0.52& 0.48\\
  0.011 & 0.52& 0.52& 0.51\\
  0.014 & 0.56& 0.50& 0.61\\
\hdashline
  0.017 & 0.68& 0.52& 0.78\\
  0.02 & 0.82& 0.56& 0.95\\
  0.023 & 0.92& 0.72& 0.998\\
  0.026 & 0.98& 0.90& 1.0\\
  0.03 & 0.999& 0.99& 1.0\\
  0.04 & 1.0& 0.999& 1.0\\
  0.05 & 1.0& 1.0& 1.0\\
  0.06 & 1.0& 1.0& 1.0\\
  0.07 & 1.0& 1.0& 1.0\\
  0.08 & 1.0& 1.0& 1.0\\
  \hline
\end{tabular}
\label{table:sbm}
\end{table}

\begin{figure}[h!]
\centering
\includegraphics[width=0.95\linewidth]{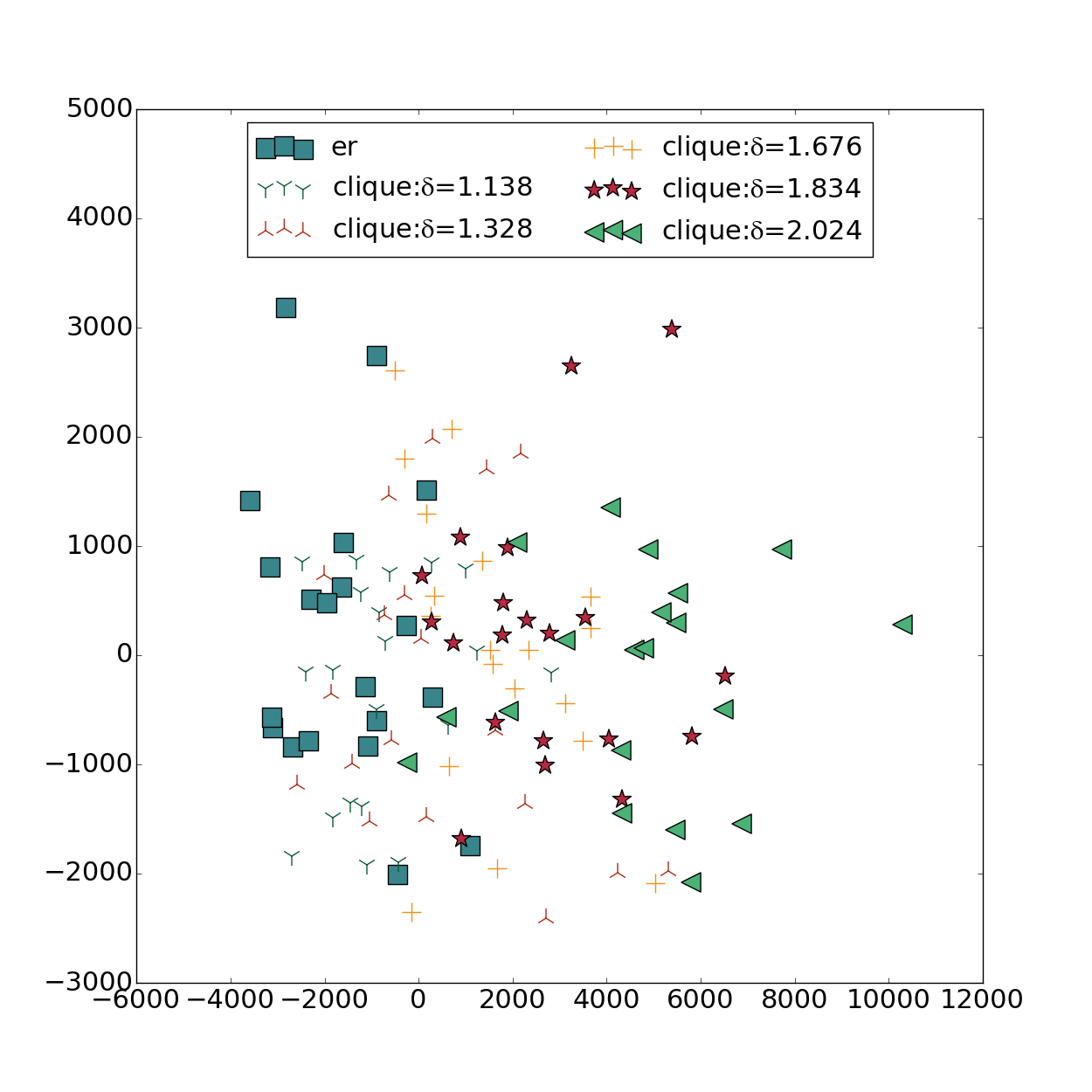}
\smallskip
\vspace{-0.2in}
\caption{Two-dimensional embeddings of the topological features of ER graphs with and without cliques}\label{fig:top_clique}
\end{figure}

\begin{table}[h!]
\centering
\caption{Performance comparison on the planted clique problem. Here $n = 1000$ and $p = 0.5$ for all ER graphs.}
\small
\begin{tabular}{ | l |c| c  |c | c|}
  \hline
  k &$\beta$ & Topological & Walk2Vec & Walk2Vec-SC\\   
  & & Feats. &  &  \\ 
\hlinewd{1pt}
  10&0.316 & 0.48& 0.51& 0.50\\
  21&0.664 & 0.52& 0.54& 0.60\\
  31&0.980 & 0.71& 0.62& 0.71\\
\hdashline
  33&1.044 & 0.77& 0.67& 0.77\\
  36&1.138 & 0.87& 0.70& 0.84\\
  39&1.233 & 0.94& 0.77& 0.92\\
  42&1.328 & 0.98& 0.81 & 0.97\\
  47&1.486 & 0.999& 0.90& 0.998\\
  53&1.676 & 1.0& 0.97& 1.0\\
  58&1.834 & 1.0& 0.99& 1.0\\
  64&2.024 & 1.0& 1.0& 1.0\\
  \hline
\end{tabular}
\label{table:planted}
\end{table}

Table~\ref{table:sbm} shows the AUC performance comparison of topological features, Walk2Vec and Walk2Vec-SC on ER graphs and SBM graphs for different $\delta$ values. All graphs are generated with $n = 1000$ number of nodes and $p = 0.05$. The horizontal dashed line in the table represents the location of the theoretical limit for discriminating ER graphs and SBM graphs. Observe that the performance is comparable between topological features and Walk2Vec representation, while Walk2Vec representation for a graph instance is considerable cheaper than the topological features. Additionally, the Walk2Vec-SC performs the best out of the three. The phase transition is very sharp around the threshold, i.e., the dashed line.

Table~\ref{table:planted} shows the AUC performance comparison 
of the three methods on ER graphs with and without planted cliques. All ER graphs are generated with $n=1000$ and $p = 0.5$. Note that the value of $\beta$ corresponds to the size of the planted clique. The horizontal dashed line represents the location of the theoretical value of the phase transition. Observe from Table~\ref{table:planted} that the performance is comparable between the topological features and Walk2Vec-SC representations. Both exhibit very sharp transition around the threshold. On the other hand, the phase transition of Walk2Vec is not as sharp. This method could be useful for detecting large cliques as it is the most computationally efficient method of graph vector representation among the three methods. 

A notable difference between the random walk representations generated by our two methods and the topological representation of~\cite{caceres2016} is the quality of clustering of different graph instances of the same model parameter. As illustrated in Figure~\ref{fig:top_sbm} and ~\ref{fig:top_clique}, topological feature embeddings appear to be much more sensitive to variations due to model randomness, and therefore, less robust in capturing the inherent structural similarity of instances generated by the same model parameter. By contrast, the Walk2Vec and Walk2Vec-SC representations appear do a much better job in smoothing out randomness effects. In realistic scenarios, we expect Walk2Vec and Walk2Vec-SC to be much more robust in handling inherent noise in observed graph instances.

In addition, both Walk2Vec and Walk2Vec-SC are scalable to large graphs. The Walk2Vec computes the random walk features on selected nodes and stack the feature vectors. For sparse graphs, the Walk2Vec's computation complexity  is $O(n)$ where $n$ is the number of nodes in the graph. For Walk2Vec-SC, since the random walk feature is computed for every node in the graph and the dictionary learing is linear, the computation complexity for Walk2Vec-SC is thus $O(n^2)$. Note that one can parallelize the computation of the random walk feature for each node, thus the time it takes to compute the Walk2Vec-SC representation can be shorten significantly, making it also scalable to large graphs. On the other hand, the compuation of topological features is dominated by the average shortest path length, whose computation complexity is $O(n^2\log n)$ for directed graphs and $O(n^2)$ for undirected graphs, while Walk2Vec and Walk2Vec-SC apply to both weighted and unweighted graphs.

\section{Conclusion}

In this paper, we propose Walk2Vec, a novel approach that uses random
walks for learning robust graph representations. Our method learns discriminating features by leveraging different mechanisms to initiate random walks and by correlating temporal dependencies between random walk steps. These representations are
invariant under graph isomorphism and graph size. Experimental results on two challenging graph model selection problems show classification performance that closely matches known theoretical limits, implying that the Walk2Vec approach can map graphs into meaningful representations. Furthermore, these learned representations are robust to inherent randomness or noise in the data generation process, while simple and scalable to compute for large graphs.

%
%
\bibliographystyle{unsrt}\scriptsize
\bibliography{recog,graph,entity,related}

\end{document}